\tikzset{motion/.style={draw,line width=1.75pt},
  deviation/.style={motion,Firebrick3},
  continuation/.style={motion,Chartreuse3},
  announcement/.style={motion,DodgerBlue4},
  announcement overlay/.style={announcement, fill=DodgerBlue4!60,fill opacity=0.5},
  plan/.style={announcement,dashed},
  step font/.style={font=\sffamily\footnotesize},
  planStep/.style={motion,circle,step font,inner sep=1.5pt,fill=white, fill opacity=1},
  continuation step/.style={step font,black,opacity=0.5},
  diagram arrows/.style={line width=1.5pt,latex-latex},
  communication arrows/.style={diagram arrows, Sienna2},
  communication/.style={decorate, decoration={waves,segment length=4pt,angle=40,radius=5pt}, line width=1.5pt, Sienna2},
  observation/.style={line width=1.25pt,Sienna2,latex-latex}
}
\newcommand{\planSteps}[4][]{
  \foreach \s [count=\si from #3+1,remember=\si-1 as \siPrev (initially #3)] in {#4} {
    \path[#1,solid] (#2\siPrev) ++\s node[draw opacity={0.1*\si+0.3},planStep] (#2\si) {\si};
    \draw[#1,motion,opacity={0.1*\si+0.3}] (#2\siPrev) -- (#2\si);
  }
}
\tikzstyle{process} = [rectangle, rounded corners=1mm, text centered, draw=black, fill=orange!30]
\tikzstyle{decision} = [diamond, text width=3cm, inner sep=0, aspect=2, text centered, draw=black, fill=green!30]
\tikzstyle{arrow} = [->,>=stealth]
\tikzstyle{arena} = [step=1cm,Ivory4]
\tikzstyle{forbidden} = [Firebrick2]
\date{\today}
\newcommand{\RQ}[1]{\textbf{RQ#1}}
\theoremstyle{definition}
\newtheorem{definition}{Definition}
\newtheorem{theorem}{Theorem}
\newtheorem{lemma}{Lemma}
\newcommand{\paratitle}[1]{\noindent{\bf #1}.}
\newif\ifhidecomments
\title{HoLA Robots: Mitigating Plan-Deviation Attacks in Multi-Robot Systems with Co-Observations and Horizon-Limiting Announcements}
\author{
    Kacper Wardega\\
    Boston University\\
    Boston\\
    USA\\
    \texttt{ktw@bu.edu}\\
    \And
    Max von Hippel\\
    Northeastern University\\
    Boston\\
    USA\\
    \texttt{vonhippel.m@northeastern.edu}\\
    \And
    Roberto Tron\\
    Boston University\\
    Boston\\
    USA\\
    \texttt{tron@bu.edu}\\
    \And
    Cristina Nita-Rotaru\\
    Northeastern University\\
    Boston\\
    USA\\
    \texttt{c.nitarotaru@northeastern.edu}\\
    \And
    Wenchao Li\\
    Boston University\\
    Boston\\
    USA\\
    \texttt{wenchao@bu.edu}\\
}
\begin{document}
    \maketitle
    
 \begin{abstract}

Emerging multi-robot systems rely on cooperation between humans and
robots, with robots following automatically generated motion plans to service
application-level  tasks.  
Given the
safety requirements associated with operating in proximity to humans and expensive infrastructure, it is important to understand and mitigate the
security vulnerabilities of such systems caused by compromised robots who
diverge from their assigned plans.
We focus on centralized systems, where a \emph{central entity} (CE) is responsible for
determining and transmitting the motion plans to the robots, which report
their location as they move following the plan.  The CE checks that
robots follow their assigned plans by comparing their expected location to
the location they self-report.  We show that this self-reporting monitoring mechanism is vulnerable to
\emph{plan-deviation attacks} where compromised robots don't follow their assigned plans while trying to conceal their movement by mis-reporting their
location.  We propose a two-pronged mitigation for plan-deviation attacks:
\begin{enumerate*} \item an attack detection technique leveraging both the robots' local sensing
    capabilities to report observations of other robots and {\em
        co-observation schedules} generated by the CE, and \item  a prevention%
    technique where the CE issues {\em horizon-limiting announcements} to the
    robots, reducing their instantaneous knowledge of forward lookahead steps
    in the global motion plan.\end{enumerate*}  
On a large-scale automated warehouse benchmark, we show that our solution enables attack prevention guarantees from a stealthy attacker that has compromised multiple robots.

\end{abstract}

\section{Introduction}

In this work we study attacks and defenses in multi-robot systems (MRS) following  a centralized execution model~\cite{honig2019persistent}, which is
representative of MRS in known, structured
environments with centralized management and control. 
The system consists of an external \emph{application},
the robots achieving the task, and a \emph{central entity} (CE) 
which is responsible for determining and transmitting the motion plans to each one of the robots.
Ideally, unplanned deviations due to malfunctions are detected by the CE by comparing the expected position of the robots to the one they self-report. 
Unfortunately, compromised robots who deviate from the  motion plan and attempt to move through forbidden regions of the environment cannot be detected solely
by self-reports of location from robots, as the compromised ones can lie in their reports to remain undetected.
We refer to such deliberate deviations as  \emph{plan-deviation attacks} and we focus on them in this work.

Plan-deviation attacks were previously introduced in
\cite{wardega_resilience_2019}, which proposed to use co-observations of
other robots to detect deviations. Specifically, logic-based planning centered around formal specification of the detection constraint result in motion plans such that the implied co-observation
schedule can guarantee detection for {\em a single compromised robot}. However, such plans
are not guaranteed to exist, and the intractability of the logic-based planning problem prevents the approach from scaling to realistic MRS deployments.
More importantly, \cite{wardega_resilience_2019} does not generalize to multiple compromised robots.
We design our solution to address these concerns based on two observations about the attackers:
\begin{enumerate*} \item they use the motion plan information from the CE to
            determine how to move towards the forbidden zone, and \item they
lie about their location to try to remain undetected by the CE \end{enumerate*}. 
The key idea of our approach is a novel mechanism
of {\em horizon-limiting announcements} (HoLA), where we limit how much motion
planning information is announced to the robots at any given time in order to
stymie the ability of the attacker to plan successful attacks, but still send
as many steps as possible. This is achieved through an \textit{efficient} verification algorithm conducted by
the CE which checks whether the planned announcements prevent stealthy attackers from
moving towards the forbidden zone because of not having enough
information; in the worst case only one step will be released. 
In this work, our contributions are:
\begin{itemize}

\item We provide a formal characterization of plan-deviation attacks, centered around {\em stealthy attackers} who deviate from the plan only if they know that 
they can move towards the forbidden region while remaining undetected.

\item We propose a mitigation, HoLA, for plan-deviation attacks that combines 
co-observation schedules with issued horizon-limiting announcements to prevent
attacks from stealthy attackers.

\item We provide formal guarantees that horizon-limiting announcements prevent attacks from a stealthy attacker that has compromised multiple robots.

\item We propose a procedure for efficiently computing the maximum-length horizon-limiting announcements. We evaluate the computation overhead of the verification
and show that the procedure
scales well to instances with many robots; the procedure exhibits
robot-level parallelism and takes no more than 2 minutes running  on a single core %
to verify scenarios with 100 robots.

\end{itemize}

\section{Problem Formulation}
\label{sec:formulation}

We focus on the centralized MRS model which consists of a set of robots ($R$), and a \emph{central entity}~(CE) that communicates with and manages the robots. %
The~CE accepts as input a queue of application tasks that are to be carried out
by the robots in the environment, computes
multi-robot motion plans, $x$, that carry out the application tasks, and then iteratively announces portions of the
motion plans, $\alpha(t)$, to the robots.  The CE ensures that the motion plans adhere to
safety constraints in the form of locations in the environment that are marked
as out-of-bounds to the robots. These could be due to a variety of reasons,
e.g. a human moving through the
environment, robots experiencing localization faults, unsafe conditions in the
environment, etc.
The  environment is modeled %
as a graph~$G=(V,E)$, with time-varying out-of-bounds locations denoted $V_\text{forbidden}(t)\subset V$.   Motion plans in the
centralized MRS model are formally defined as follows~\cite{stern_multi-agent_2019}.

\begin{definition}[MAPF plan]
	\label{def:mapf-plan}
	
	A multi-robot path-finding plan for robots~$R$ in the environment~$G=(V,E)$
	is a finite sequence~$\{ x_t\}$ with elements~$x_t\in
	V^R$, where the sequence~$x^i=\{x_t^i\}$ is the single-robot plan for robot~$i\in R$, and that satisfies the following constraints for all $t$ and for all~$i,j\in R$:
	\begin{enumerate*}
		\item Each~$x^i$ is a walk on~$G$.
		\item robots do not occupy the same location simultaneously.
		\item robots do not traverse the same edge simultaneously. 
	\end{enumerate*}
\end{definition}

The announcements made by the CE are MAPF prefixes, i.e.~$\alpha(t)\preceq x$, defined as follows.
	\begin{definition}[MAPF prefixes and continuations]
		\label{def:mapf-prefix}
		
		Let~$x$ and~$y$ be two MAPF plans. We say that~$y$ is a \emph{MAPF prefix}
		of~$x$ and equivalently that~$x$ is a \emph{MAPF continuation} of~$y$,
		denoted as~$y\preceq x$, if~$y^i$ is a prefix for~$x^i$ for all~$i\in R$.

	\end{definition}  

\paratitle{Attacker model} Assume that an attacker has compromised a subset~$A\subseteq R$ of
the robots, with the intention to sabotage the system and cause robots in $A$ to violate the CE's safety constraints without being detected. The compromised robots have full information of the motion plan announcements $\alpha(t)$ from the CE, however the compromised robots do not know which other robots are compromised and are unable to coordinate, and
hence need to act independently.  We exclude strong-coordination between
attackers because  in centralized settings, in-protocol communication between
robots is monitored, and the robots are moving in an area where such network
communication would be detected. We also assume that the robots do not have
access to other side-channels for communication.  
Malicious deviations from the nominal plan conducted by a compromised robot
are not easily detectable by the~CE, since the compromised robot can lie in its
self-reports to the~CE.
We refer to such malicious deviations as \emph{plan-deviation attacks}, and
to deviations that in addition seek to move the robot into one of the forbidden areas
in $V_\text{forbidden}(t)$ as \emph{forbidden plan-deviation attacks}. We formalize these threats below.

\begin{definition}[Plan-Deviation Attack]
	\label{def:mapf-dev}
	
	Let $x$ be a MAPF plan for set of robots $R$ on map $G=(V,E)$. We say that
	$\tilde{x}$ is a MAPF deviation for robot $i\in R$ on timesteps $(s,f)$ from $x$ if
	$\tilde{x}$ satisfies $(\forall
	j,t)(x_t^j\neq\tilde{x}_t^j\Leftrightarrow(j=i, s<t<f))$. 
\end{definition}

\begin{definition}[Forbidden Plan-Deviation Attack]
	\label{def:dang-mapf-dev}
	
	A MAPF deviation $\tilde{x}$ for robot $i$ on $(s,f)$ is a \textit{forbidden}
	deviation, in short $\mathcal{F}(\tilde{x},x,i,s,f)$, if
	\((\exists t\in(s,f))\) s.t. \((\tilde{x}_t^i\in V_\text{forbidden}(t))\).
	
\end{definition}

\begin{figure}[t]
	\centering
	\begin{tikzpicture}[scale=1.0]
		\draw[arena] (-1,0) grid (7,5);
\foreach \x/\y in {-1/3,0/1,1/1,3/1,3/4,5/0,5/2,5/3} {
  \fill[black] (\x,\y) rectangle (\x+1,\y+1);
}
\fill[forbidden] (1,2) rectangle (2,3);
		\coordinate (j area) at (2,0);
		\draw[announcement overlay] (j area) -- ++(3,0) -- ++(0,1) -- ++(-2,0) -- ++(0,2)%
		-- ++(-1,0) -- ++(0,-2) -- ++(-2,0) -- ++(0,-1) -- cycle;
		\coordinate (k area) at (6,1);
		\draw[announcement overlay] (k area) -- ++(-1,0) -- ++(0,-1) -- ++(-1,0) -- ++(0,3)%
		-- ++(1,0) -- ++(0,-1) -- ++(1,0) -- ++(0,1) -- ++(1,0) -- ++(0,-3) -- ++(-1,0) -- cycle;
		\input{figures/tikz/robotsAnnouncements}
		\node at ($(kA0)+(2mm,2mm)$) {\includegraphics[width=4mm]{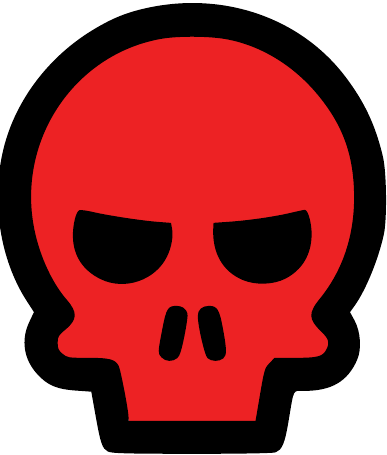}};
		\foreach \x/\y in {1.5/0.5,0.5/1.5,2.5/1.5} {
			\node[continuation step] at ($(j area)+(\x,\y)$) {4};
		}
		\foreach \x/\y in {2.5/0.5,0.5/2.5,2.5/2.5,4.5/0.5} {
			\node[continuation step] at ($(j area)+(\x,\y)$) {5};
		}
		\input{figures/tikz/deviation}
	\end{tikzpicture}
	\caption{
		The compromised robot~$i$ 
		has computed a forbidden MAPF deviation~$\tilde{x}$ 
		(red paths) on timesteps~$(1,5)$. A stealthy attacker,
		however, realizes that there is a \emph{possible continuation} (shaded blue region) from the announced
		portion of the~CE's MAPF plan (blue paths) that would result in a
		co-observation-based detection by the~CE: if robot~$j$ goes north at time
		step 3, then~$j$ would observe~$i$ at a location where~$i$ is not
		supposed to be. As a result, the stealthy attacker chooses not to perform the plan-deviation attack. 
	}
	\label{fig:attacker types}
\end{figure}

\paratitle{Undetected plan-deviations}
Assume that, up to time~$t$, no plan deviation attack has
been attempted, and so the true system state~$\tilde{x}_t$ matches the~CE's
expectation~$x_t$. A compromised robot~$a\in A$ may choose to deviate from the
plan by picking a different action~$(x_t^a,\tilde{x}_{t+1}^a)\in E$ s.t.
$\tilde{x}_{t+1}^a\neq x_{t+1}^a$. In order to hide that the deviation has
occurred, the compromised robot would falsify its self-report and attest to the CE that
it has moved into the nominal location. Provided that $a$ has
not collided with a non-compromised robot, i.e.~$\tilde{x}$ is still a MAPF
plan, and that $a$ has not caused a non-compromised robot~$i\neq a$ to be
unable to perform an action, i.e. that~$\tilde{x}$ is a MAPF deviation for~$i$
from $x$, then it is easy to see that none of the self-reports from the
robots will have changed. Such plan deviations are called \emph{undetected plan deviations}.

\paratitle{Stealthy attackers} This type of attacker uses their knowledge of 
the currently announced MAPF prefix $\alpha(t)$ to determine whether
there exists a MAPF plan $\tilde{x}$ that is guaranteed to be a forbidden
undetected deviation from the true plan $x$, $x\succeq\alpha(t)$.
Specifically, a stealthy attacker needs to ensure that there is a MAPF continuation
$\tilde{x}$ from $\tilde{x}_{t}$ s.t. $\tilde{x}$ is a forbidden \emph{and} undetected MAPF
deviation from $x$ on $(t,f)$ prior to actually executing the deviation. In practice, the attacker can easily
verify this if it has enough information about $x$; if the announcement
$\alpha(t)$ reveals a large horizon of the plan, the stealthy attacker $a$ can easily
solve a single-robot planning problem \cite{choset2005principles} using $\alpha(t)$ to avoid conflicts with
the other robots $i\neq a$.

\paratitle{Security-aware execution problem} 
For a variety of reasons, the~CE wants to announce as much of the MAPF plan as
possible, e.g. due to considerations for network latency, contention, or
robustness to network and motion faults~\cite{atzmon_robust_2020}. Hence,  at each time $t$, the CE aims to \emph{maximize} $|\alpha(t)|$ subject to the constraint that the unknown compromised subset $A\subseteq R$ of stealthy attackers are not able to perform forbidden plan-deviation attacks.

\section{Mitigating Plan-Deviation Attacks} %
\label{sec:solution}

In this section we present a solution to plan-deviation attacks
against centralized MRS.  Our approach consists of two core components, co-observations and horizon-limiting announcements.

\subsection{Co-observation Schedules}
\label{subsec:coobs-sched}

In order to decrease the set of MAPF deviations that go undetected by the CE,
we propose to include {\em co-observations} of other robots in the self-reports sent to
the CE. Ordinarily, the onboard sensing capabilities of the robots are only
used to avoid collisions in fault scenarios. However, we notice that using the
sensors to report all inter-robot observations has measurable benefits for
security. 

Our approach is to include in robot~$i$'s self-report at time~$t$,
$\tilde{\beta}(t)^i$, all observations that~$i$ makes of other robots at time~$t$, in
addition to~$i$'s self-report on action success. As an example, say that robot
$i$ is at location~$v$ and robot~$j$ is at location~$w$, and~$(v,w)\in E^*$ (in
other words~$i$ can observe~$j$ from its vantage point). Then
$\tilde{\beta}(t)^i=\{\tilde{x}_{t+1}^i=x_{t+1}^i,\tilde{x}_t^j=w\}$, or in
plain English, ``$i$ reports that~$i$ has moved successfully to~$x_{t+1}^i$ and
that~$i$ observed~$j$ at time~$t$ at location~$w$.'' We note that this generalizes straightforwardly to environments instrumented with fixed observers (cameras) or fully-trusted agents. 

\begin{definition}[Co-Observation-Based Detection]
    \label{def:co-observation-detection}

    Let~$x$ be a MAPF plan and~$\beta$ be the localization and co-observation
    self-reports implied by successful execution of~$x$:
    $$\beta(t):=\{\{\tilde{x}_{t+1}^i=x_{t+1}^i\}\cup\{(i,j,\tilde{x}_t^j):j\in
    R\setminus i\land(\tilde{x}_t^i,\tilde{x}_t^j)\in E^*\}\}_{i\in R}$$ If any
    robot~$i\in R$ fails to perform an action, does not observe a robot that it
    should have, or does observe a robot that it should not have, then
    the self-report~$\tilde{\beta}(t)^i$ sent by~$i$ to the CE will not 
    match~$\beta(t)^i$, triggering a co-observation-based detection in the CE.

\end{definition}

\subsection{Horizon-Limiting MAPF Announcements}
\label{subsec:secure-ann}

We now
focus on making the attack planning problem against a system with robot
co-observation-based mitigation more difficult given a general MAPF plan.  \emph{The key idea is as follows:  the CE can improve the security of the system by
preventing the attacker from easily computing forbidden and undetected
plan-deviation attacks.} The simplest way to accomplish this is to limit the
amount of information available to the attacker about the MAPF plan, that is,
by limiting the amount of future planning information available at every time
instant, $\alpha(t)$.

\noindent
\paratitle{Limiting stealthy attackers}
Consider again the attack planning problem for a stealthy attacker~$a\in A$.
Since the stealthy attacker only attempts a plan-deviation attack if success
and stealth are ensured, the amount of information that the attacker has about
the plan is critical -- if~$\alpha(t)$ provides planning information on a long horizon, the attack planning problem is essentially a \textit{graph reachability}
problem.  Formally, this is the case when there exists a forbidden, undetected
deviation for~$a$ on~$(t,f)$ where~$f$ is less than the length of the shortest
single-agent plan in~$\alpha(t)$, i.e.~$f<\min_i\lvert\alpha(t)^i\rvert$.  If~$\alpha(t)$
does not reveal so much information, however, the attack planning problem is
made considerably more difficult. This is because the attacker needs to compute
a deviation that is not only forbidden, but also guaranteed to be undetected
for all possible MAPF continuations of~$\alpha(t)$. Conversely, this tells us
that to mitigate attacks from stealthy attackers, it suffices to show that for
every forbidden deviation for~$a$ from~$x$ that there exists a continuation
from~$\alpha(t)$ would result in a detection, in which case the stealthy
attacker would abstain from deviating from the plan. This motivates a class of
announcement strategies for the MAPF plan~$x$ that guarantees security from
stealthy attackers:

\begin{definition}[Horizon-Limiting MAPF Announcements]
\label{def:secure-mapf-announcements}
 Let~$x$ be MAPF plan on~$G$ for~$R$,~$\alpha$ an announcement sequence for~$x$, and~$\beta_x$
 the sequence of robot self-reports implied by~$x$. Then~$\alpha$ are horizon-limiting MAPF announcements for~$x$ iff

        $$\label{eq:verify}
        (\forall\tilde{x},i\in R,t,f\in\mathbb{N})(\exists y)(\mathcal{F}(\tilde{x},x,i,t,f)\Rightarrow y\succeq\alpha(t)\land\beta_y^{R\setminus\{i\}}\neq\beta_{\tilde{x}}^{R\setminus\{i\}})$$

    That is, the announcements~$\alpha$ are considered horizon-limiting if and only if
    they at no point reveal enough information for the attacker to be certain
    that a given forbidden MAPF deviation will be undetected by the CE, since
    there exists some continuation~$y$ from~$\alpha(t)$  such that the
    self-reports induced by~$y$ do not match the self-reports induced by the
    deviation. For a given time $t$, we say that $\alpha(t)$ is the \emph{maximum-length horizon-limiting announcement} if there does not exist any $\alpha^*(t)$ s.t. $|\alpha(t)|<|\alpha^*(t)|$, $\alpha^*(t)\preceq x$, where $\alpha^*(t)$ is horizon-limiting.

\end{definition}

\begin{theorem}[Guaranteed Security from stealthy Attackers]
    \label{thm:stealthy}

Let~$x$ be a MAPF plan and assume that the CE uses a horizon-limiting MAPF 
announcement~$\alpha$ for~$x$. Then no robots compromised by a stealthy attacker would attempt a plan-deviation attack.

\end{theorem}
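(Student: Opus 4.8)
The plan is to argue by contradiction, combining the operational definition of a stealthy attacker with the horizon-limiting property of $\alpha$. Suppose, for contradiction, that some robot $a$ compromised by a stealthy attacker does attempt a plan-deviation attack at some time $t$. By the definition of a stealthy attacker, $a$ commits to the attack only when it is \emph{certain} that the chosen deviation is both forbidden and undetected; since $a$'s only knowledge of the true plan $x$ is the announced prefix $\alpha(t)$, this certainty must hold uniformly across every plan consistent with $\alpha(t)$. Concretely, there must exist a forbidden MAPF deviation $\tilde{x}$ for $a$ on some window $(t,f)$ such that $\tilde{x}$ remains undetected with respect to \emph{every} MAPF continuation $y\succeq\alpha(t)$, since any such $y$ could be the plan the CE is actually running.

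The first step is to translate ``undetected with respect to $y$'' into a statement about self-reports. Because $a$ falsifies its own report to attest to the nominal location, $a$'s self-report matches the CE's expectation by construction; hence detection can only be triggered through the co-observations carried in the reports of the \emph{honest} robots $R\setminus\{a\}$. By Definition~\ref{def:co-observation-detection}, the CE raises a detection exactly when some honest robot's report fails to match the report implied by the plan the CE believes is executing. Thus ``$\tilde{x}$ is undetected against $y$'' means precisely $\beta_y^{R\setminus\{a\}}=\beta_{\tilde{x}}^{R\setminus\{a\}}$, and the attacker's certainty amounts to this equality holding for \emph{all} continuations $y\succeq\alpha(t)$.

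The second step applies the horizon-limiting guarantee. Instantiating the defining condition of Definition~\ref{def:secure-mapf-announcements} with the deviating robot $i=a$ and with the deviation $\tilde{x}$ and times $(t,f)$ fixed above, and using that $\mathcal{F}(\tilde{x},x,a,t,f)$ holds, we obtain a witness continuation $y\succeq\alpha(t)$ with $\beta_y^{R\setminus\{a\}}\neq\beta_{\tilde{x}}^{R\setminus\{a\}}$. This single continuation directly contradicts the attacker's requirement that the honest self-reports agree for every continuation of $\alpha(t)$: if the true plan happened to be $y$, the mismatched co-observation would expose the deviation. Hence no certainty-providing forbidden deviation exists at any $t$, so a stealthy attacker never attempts the attack, which is the claim.

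I expect the main obstacle to be stating the stealthy-attacker's decision semantics precisely rather than any intricate calculation. In particular, I would need to formalize that ``guaranteed undetected'' is a \emph{universal} quantification over all continuations of $\alpha(t)$ (the attacker's information set), and to justify the reduction of detection to the honest-robot reports $R\setminus\{a\}$ instead of all of $R$, which hinges on the fact that the compromised robot's own report is always falsified to the nominal value. Once these two modeling points are pinned down, the contradiction with the existential witness furnished by Definition~\ref{def:secure-mapf-announcements} is immediate, and the rest is simply matching the quantifiers.
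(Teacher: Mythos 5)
Your proposal is correct and follows essentially the same reasoning as the paper, which in fact presents Theorem~\ref{thm:stealthy} without a separate proof, treating it as an immediate consequence of matching the stealthy attacker's decision rule (attack only if undetectability is guaranteed over \emph{all} continuations of $\alpha(t)$) against the existential witness $y\succeq\alpha(t)$ supplied by Definition~\ref{def:secure-mapf-announcements}. Your contradiction argument, including the reduction of detection to the honest robots' reports $\beta^{R\setminus\{a\}}$, is exactly the definitional unwinding the paper intends.
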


\subsection{Synthesis of Horizon-Limiting Announcements}
\label{subsec:verif}

In our approach, the CE first leverages a conventional,
non-security-aware MAPF solver in order to compute a cost-optimized MAPF plan
as it would in a typical deployment. In the post-processing step however, we
verify that Eq.~\ref{eq:verify} holds before fixing the announcements $\alpha$.
If the announcements cannot be verified to be horizon-limiting, then we attempt
to resolve the issue by iteratively choosing less-informative announcements
until the maximum-length horizon-limiting announcement is found.

The main challenge that we face in designing our verification procedure is the computational complexity of MAPF itself, which is known to be NP-hard~\cite{yu2013structure}. Therefore, a complete attack planning algorithm for the stealthy attacker with
imperfect information is computationally difficult as it entails enumerating MAPF continuations.
This motivates us to
instead focus on developing an incomplete, but sound and efficient, verification procedure
for the horizon-limiting announcement checking problem, Eq.~\ref{eq:verify}. Our solution
is \emph{non-deterministic co-observation enumeration}, shown in
Alg.~\ref{alg:verify}. We base our algorithm on an abstraction of MAPF planning
that allows non-deterministic movements for the robots on $G$. Our abstraction
allows non-compromised robots to ignore vertex and edge constraints of MAPF
plans, allowing us to efficiently explore the co-observation schedules of many
MAPF continuations from the current $\alpha(t)$ simultaneously.  Although our
abstraction does over-approximate the set of MAPF continuations, we can prove
that the abstractions preserve the possibility of pairwise co-observation.
That is, if under the abstraction it is possible for a robot $i$ to observe
robot $j$ at a location $v$ at time $t$, then there is some MAPF continuation
where $j$ is observed at location $v$ at time $t$. This property of the
abstraction ensures that Alg.~\ref{alg:verify} is sound, since
Alg.~\ref{alg:verify} is essentially verifying that there is no forbidden
deviation through the complement of the observed region under the
non-deterministic movement abstraction.

\algblockdefx[DoBlock]{Do}{EndDo}
{\textbf{do}}
[1][true]{\textbf{while} #1}

\begin{algorithm}[h!]
	\caption{Non-deterministic Co-observation Enumeration}
	\label{alg:verify}
	\begin{algorithmic}[1]
		\small
		\Procedure{Verify}{$G,S,V_\text{forb.},\alpha,s,a$}
		\State $u\gets s$\Comment{time offset}
		\State $X_u^R\gets\alpha(s)_u^R$\Comment{init. reachable sets for each robot}
		\Do
		\State $X_{u+1}^R\gets\Call{Reachable}{\alpha(s),G,X_{u}^R,u,C}$
		\State $X_{u+1}^a\gets X_{u+1}^a\setminus X^{R\setminus\{a\}}_{u}$
		\State $X_{u+1}^{R\setminus\{a\}}\gets X_{u+1}^{R\setminus\{a\}}\setminus X_{u+1}^a$
		\State $X_{u+1}^a\gets X_{u+1}^a\setminus X_{u+1}^{R\setminus\{a\}}$
		\State $u\gets u+1$
		\EndDo[$X_{u}^a\cap\mathcal{N}_S(X_{u}^{R\setminus\{a\}})\neq\{\}$]
		\State $u^*\gets u$
		\State $Q\gets X_{u^*}^a\cap\mathcal{N}_G(X_{u^*}^{R\setminus\{a\}})$
		\State\Return $\bigvee_{q\in Q}\neg\Call{AttackExists}{a,G,V_\text{forb.},X,s,u^*,q}$
		\EndProcedure
		\Procedure{Reachable}{$x,G,X,t,C$}
		\State $X_\text{next}\gets\{\}$
		\For{$v\in X$}
		\State $X_\text{next}\gets X_\text{next}\cup\Call{MoveRobot}{x,G,v,t,C}$
		\EndFor
		\State\Return $X_\text{next}$
		\EndProcedure
		\Procedure{MoveRobot}{$x,G,v,t,C$}
		\If{$\exists r\in R, v=x_t^r\land|x^r|>t+1$}
		\State \Return $\{x_{t+1}^r\}$\Comment{prefix for $r$ is known}
		\EndIf
		\State $\text{ret}\gets\mathcal{N}_G(v)\setminus\{x_{t+1}^r:r\in R\land|x^r|>t+1\}$
		\If{$v\notin\text{ret}$}\Comment{avoid edge conflict}
		\State $\text{ret}\gets\text{ret}\setminus\{x_t^r:r\in R\land x_{t+1}^r=v\}$
		\EndIf
		\State $\text{ret}\gets\text{ret}\setminus\{v:(v,t+1)\in C\}$
		\If{$|\text{ret}|=0$}
		\State $C\gets C\cup\{(v,t)\}$
		\State $\textbf{raise } \texttt{CONFLICT}$
		\EndIf
		\State \Return $\text{ret}$
		\EndProcedure
		\Procedure{AttackExists}{$a,G,V_\text{forb.},X,s,u^*,q$}
		\State $\text{A}\gets X_t^a$
		\State $\text{B}\gets\{\}$
		\For{$u=s+1,\ldots,u^*$}
		\State $\text{A}\gets\mathcal{N}_G(A)\setminus\mathcal{N}_S(X_u^{R\setminus\{a\}})$
		\State $\text{B}\gets\mathcal{N}_G(B)\setminus\mathcal{N}_S(X_u^{R\setminus\{a\}})$
		\State $\text{B}\gets\text{B}\cup(\text{A}\cap V_\text{forb.})$
		\EndFor
		\State\Return $q\in B$
		\EndProcedure
	\end{algorithmic}
\end{algorithm}

The input to Alg.~\ref{alg:verify} is the centralized MRS instance $G=(V,E)$, $S=(V,E^*)$,
$V_\text{forbidden}(t)$, and the sequence of announcements planned by the CE
from the current time $t$ to a future time $f$, $\{\alpha(s)\}_{s\in[t,f]}$. We
iteratively fix each robot $a\in R$ as the compromised robot; by attacker
independence, from $a$'s perspective the other $R\setminus a$ may all be
non-compromised. We now iterate over the $s\in[t,f]$ and attempt to verify that
$\alpha(s)$ is not informative enough to reveal a forbidden and undetected
plan-deviation attack for robot $a$ beginning at time $s$.
Verifying $\alpha(s)$ has two phases: (1) compute the soonest time $u^*>s$ and a
location $l_\text{obs}$ where $a$ could be observed by a robot in
$R\setminus\{a\}$ and (2) show that no forbidden undetected deviation exists for
$a$ on $(s,u^*)$.

Since $\alpha(s)$ only reveals partial planning information for $i\in R$ up to
time $|\alpha(s)^i|$, we account for the unknown future of a given robot by
allowing them to move non-deterministically on $G$ for time steps
$u>|\alpha(s)^i|$. We denote the set of locations that $i\in R$
(non-)deterministically occupies at time $u$ as $X_u^i$. The dynamics of the
non-deterministically-moving agents are as follows:

\begin{enumerate}

    \item For all $i\in R$, $X_u^i=\{\alpha(s)_u^i\}$ for $u\leq|\alpha(s)^i|$, i.e. robots
        move deterministically for times where their position is specified by
        $\alpha(s)$.

    \item For $u>|\alpha(s)^i|$, $X_u^i\leftarrow\mathcal{N}_G(X_{u-1}^i)$,
        i.e. non-deterministically-moving robots follow all edges in $G$ from
        the set of locations previously occupied.

    \item For $u>|\alpha(s)^i|$, remove from $X_u^i$ all locations that are
        deterministically occupied by other robots $R\setminus\{i\}$, or would
        lead to a vertex- or edge-conflict with a deterministically-moving
        robot in $R\setminus\{i\}$. The conflict locations are stored in a set
        $C$, which is updated with a new conflict whenever there is a $c\in
        X_{u-1}^i$ that has no children (available actions) to $X_u^i$. The
        verification for $\alpha(s)$ is restarted whenever a new conflict is
        found.

    \item The non-deterministically-moving compromised robot $a$ cannot move
        into any location \emph{previously} occupied by non- deterministically moving
        robots in $R\setminus\{a\}$, so remove from $X_u^a$ all elements also in $X_{u-1}^{R\setminus\{a\}}$.

    \item Non-deterministically-moving robots in $R\setminus\{a\}$ cannot move
        into any location occupied non-deterministically by $a$, so remove from $X_u^{R\setminus\{a\}}$ all elements also in $X_u^a$.

    \item The non-deterministically-moving compromised robot $a$ cannot move
        into any location occupied by non-deterministically-moving robots in
        $R\setminus\{a\}$, so remove from $X_u^a$ all elements also in $X_u^{R\setminus\{a\}}$.

\end{enumerate}

The non-deterministic dynamics are evolved for $u=s+1,\ldots,u^*$, where $u^*$
is the first time step s.t. $\exists l_\text{obs}\in X_{u^*}^a$ s.t.
$l_\text{obs}\in\mathcal{N}_S(X_{u^*}^{R\setminus\{a\}})$, i.e. when a possible
observation on $a$ by another robot in $R\setminus\{a\}$ is found, concluding
the first phase of verifying $\alpha(s)$. For the second phase, we simply check
via graph search on $G$ from source vertex $\alpha(s)_s^a$ %
if there is a MAPF
deviation $\tilde{x}$ for $a$ on $(s,u^*)$ s.t. for all $u\in(s,u^*)$,
$\tilde{x}_u^a\notin\mathcal{N}_S(X_u^{R\setminus\{a\}})$. If no such deviation
is found, then we return \texttt{true}, signifying that there exists a
continuation from $\alpha(s)$ s.t. no forbidden undetected MAPF deviation
exists for $a$ on $(s,u^*)$ (in that continuation). If each $\alpha(s)$ is
verified for each $i\in R$, then the announcements $\{\alpha(s)\}_{s\in[t,f]}$
are verified to be horizon-limiting MAPF announcements.

\begin{theorem}[Soundness of Non-Deterministic Co-Observation Enumeration]
    \label{thm:sound}

    Let $x$ be a MAPF plan and $\alpha$ an announcement sequence for $x$. Then
    if Alg.~\ref{alg:verify} returns $\texttt{true}$, then $\alpha$ is a horizon-limiting
    MAPF announcement for $x$.

\end{theorem}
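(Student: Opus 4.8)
\section*{Proof proposal for Theorem~\ref{thm:sound}}

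The plan is to show that a \texttt{true}-return of Alg.~\ref{alg:verify} certifies, robot-by-robot and step-by-step, the implication in Def.~\ref{def:secure-mapf-announcements}. First I would unfold the definition: it suffices to prove that for every candidate compromised robot $a\in R$, every deviation-start time $s$, and every forbidden deviation $\tilde{x}$ for $a$ on $(s,f)$, there is a continuation $y\succeq\alpha(s)$ with $\beta_y^{R\setminus\{a\}}\neq\beta_{\tilde{x}}^{R\setminus\{a\}}$. Since attacker independence lets the CE reason about each $a$ in isolation, and since Alg.~\ref{alg:verify} returns \texttt{true} only when \textsc{Verify} succeeds for every $a\in R$ and every $s\in[t,f]$, it is enough to establish this implication for one fixed pair $(a,s)$ under the hypothesis that its call to \textsc{Verify} returned \texttt{true}; quantifying back over $a$ and $s$ then recovers Eq.~\ref{eq:verify} (with the definition's $i,t$ instantiated as $a,s$).

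The first pillar is an \emph{over-approximation invariant} for the non-deterministic dynamics. I would prove by induction on $u\ge s$ that for every MAPF continuation $y\succeq\alpha(s)$ and every robot $i$, the true position satisfies $y_u^i\in X_u^i$, where $X_u^i$ is the layer maintained by \textsc{Reachable}/\textsc{MoveRobot}. The base case and all announced-prefix times hold because $X_u^i=\{\alpha(s)_u^i\}$ there; for $u>|\alpha(s)^i|$ we have $y_u^i\in\mathcal{N}_G(y_{u-1}^i)\subseteq\mathcal{N}_G(X_{u-1}^i)$ by the inductive hypothesis, and it remains to check that none of the pruning operations (rules~2--6: removal of deterministically occupied vertices, of edge-conflict predecessors, of the recorded conflict set $C$, and the mutual-exclusion subtractions separating $X^a$ from $X^{R\setminus\{a\}}$) can discard $y_u^i$. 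Each removed vertex is one that no single MAPF configuration could assign to robot $i$ at time $u$ --- a vertex/edge collision with a deterministically-moving robot, or a cell simultaneously claimed by $a$ and a non-compromised robot --- so MAPF-feasibility of $y$ guarantees $y_u^i$ survives. The restart-on-\texttt{CONFLICT} behaviour is sound for the same reason: a raised \texttt{CONFLICT} records only a dead-end configuration, so re-running with the enlarged $C$ excludes no genuine continuation. Consequently $\mathcal{N}_S(X_u^{R\setminus\{a\}})$ conservatively contains every location from which some non-compromised robot could observe $a$, and for $u\in(s,u^*)$ the emptiness of $X_u^a\cap\mathcal{N}_S(X_u^{R\setminus\{a\}})$ certifies that $a$ is genuinely unobservable before $u^*$.

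The crux --- and the step I expect to be the main obstacle --- is the converse \emph{co-observation preservation} property: every pairwise observation exhibited by the abstraction is realized by an honest MAPF continuation. Concretely, if $l_\text{obs}\in X_{u^*}^a$ with $l_\text{obs}\in\mathcal{N}_S(X_{u^*}^{R\setminus\{a\}})$, I must produce a single continuation $y\succeq\alpha(s)$ and a robot $j\neq a$ with $(l_\text{obs},y_{u^*}^j)\in E^*$. The difficulty is that the layers $X_u^{\,\cdot}$ record \emph{independently} reachable positions, whereas a witnessing continuation must place \emph{all} of $R\setminus\{a\}$ on one conflict-free joint trajectory simultaneously; an over-approximation that invented an unrealizable observation would break soundness. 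I would construct $y$ by back-tracing, from a witnessing pair $(l_\text{obs},w)$ with $w\in X_{u^*}^{R\setminus\{a\}}$ and $(l_\text{obs},w)\in E^*$, a concrete edge-by-edge walk for the observing robot down through the layers to $\alpha(s)$, then extending the remaining robots arbitrarily within their own reachable layers. The pruning rules~2--6 are exactly what ensure these individually valid walks can be assembled without introducing vertex or edge conflicts, so the over-approximation is \emph{tight in the pairwise sense}; this realizability argument is the heart of the proof.

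Finally I would assemble the two pillars. By the prose semantics of Phase~2, \textsc{AttackExists} performs a graph search on $G$ from $\alpha(s)_s^a$ that is exhaustive over all of $a$'s deviating walks, and the certificate $\bigvee_{q\in Q}\lnot\,\text{\textsc{AttackExists}}(q)$ guarantees that no forbidden deviation for $a$ on $(s,u^*)$ can avoid every observation region $\mathcal{N}_S(X_u^{R\setminus\{a\}})$. Hence any forbidden deviation $\tilde{x}$ (being forbidden) must satisfy $\tilde{x}_u^a\in\mathcal{N}_S(X_u^{R\setminus\{a\}})$ for some $u$, since $u^*$ is precisely the horizon up to which the invariant certifies unobservability and beyond which $a$ is forced adjacent to an observer. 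Applying co-observation preservation to that entry yields a continuation $y\succeq\alpha(s)$ under which a non-compromised robot observes $a$ at a location inconsistent with $a$'s faked nominal self-report, so the induced co-observation reports cannot agree: $\beta_y^{R\setminus\{a\}}\neq\beta_{\tilde{x}}^{R\setminus\{a\}}$. This is exactly the witness demanded by Def.~\ref{def:secure-mapf-announcements}, and quantifying over all $a\in R$ and $s\in[t,f]$ completes the proof that $\alpha$ is a horizon-limiting MAPF announcement.
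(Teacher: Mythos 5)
Your pillar~2 is exactly the paper's key lemma (Lem.~\ref{lem:sound-co-obs}), and your reduction to a per-$(a,s)$ check matches the paper's proof, but the proposal has two genuine flaws. First, your pillar~1, the over-approximation invariant ``$y_u^i\in X_u^i$ for every continuation $y$ and every robot $i$,'' is false. The mutual-exclusion subtractions (your rules 4--6, the pruning lines of Alg.~\ref{alg:verify}) delete from $X_u^{R\setminus\{a\}}$ every vertex of $X_u^a$, i.e.\ every vertex that $a$ \emph{could} occupy in \emph{some} continuation; in a different continuation $y$ where $a$ is elsewhere, a non-compromised robot may perfectly well occupy such a vertex, and then $y_u^i\notin X_u^i$. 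The abstraction is engineered not to over-approximate each robot's positions but to make every surviving pair $(q,p)\in X_u^a\times X_u^{R\setminus\{a\}}$ \emph{jointly realizable} by a single continuation --- precisely your pillar~2 and nothing more. (Fortunately, soundness never needs pillar~1; the only over-approximation required is the trivial one inside \textsc{AttackExists}, whose frontier $A$ contains every observation-avoiding walk of $a$ by construction.)

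Second, and more seriously, your assembly misreads the certificate. \textsc{Verify} returns $\bigvee_{q\in Q}\neg\textsc{AttackExists}(\dots,q)$: an \emph{existential} statement that some single location $l_\text{obs}\in Q$ admits no observation-avoiding forbidden walk ending at it at time $u^*$. This does not yield your claim that ``no forbidden deviation for $a$ on $(s,u^*)$ can avoid every observation region'': a forbidden deviation may avoid every $\mathcal{N}_S(X_u^{R\setminus\{a\}})$ and simply end at some vertex other than $l_\text{obs}$, and the attacker is not ``forced adjacent to an observer'' at $u^*$ --- $u^*$ is only the first time an observation is \emph{possible}, not inevitable. Such deviations are untouched by your argument. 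The paper closes exactly this hole with its Case~I, which your proof lacks: by the realizability lemma there is a continuation $y$ in which $a$ is nominally at $l_\text{obs}$ at time $u^*$ and is observed there by some robot $i$; a deviation with $\tilde{x}_{u^*}^a\neq l_\text{obs}$ therefore causes a \emph{missed} (rather than unexpected) observation under $y$, giving $\beta_y^{R\setminus\{a\}}\neq\beta_{\tilde{x}}^{R\setminus\{a\}}$. Only with this third case --- unexpected observation before $u^*$, missed observation at $u^*$, or an observation-avoiding walk ending at $l_\text{obs}$ that the graph search certifies cannot be forbidden --- does the case split cover all forbidden deviations; as written, yours does not.
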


\begin{proof}
    \label{appx:verify-proof}
    Eq.~\ref{eq:verify} is equivalent to the statement that for all forbidden MAPF
deviations $\tilde{x}$ for $a\in R$ on $(s,f)$, there exists a MAPF
continuation $y$ of $\alpha(s)$ s.t. execution of attack $\tilde{x}$ would
trigger a co-observation-based detection if $y$ is the CE's MAPF plan. Let
$\{X_u\}_{u>s}$ be the sequence of (non-) deterministically reachable sets for the robots
starting at time $s$ as computed by Alg.~\ref{alg:verify}. We begin by proving a lemma that the non-deterministic movement abstraction
of Alg.~\ref{alg:verify} is sound w.r.t. possibility of co-observation:

\begin{lemma}[Non-deterministic Abstraction Preserves Possibility of Co-observations]
    \label{lem:sound-co-obs}

    Let $q\in X_u^a$. If $q\in\mathcal{N}_S(X_u^{R\setminus\{a\}})$, then there
    exists a MAPF continuation $y$, $y\succeq\alpha(s)$ s.t.
    $y_u^a\in\mathcal{N}_S(y_u^{R\setminus\{a\}})$. In other words, if it is
    possible under the non-deterministic abstraction for $a$ to be observed at
    time $u$ at location $q$, then there exists a MAPF continuation from
    $\alpha(s)$ where $a$ is observed at time $u$ at location $q$.

\end{lemma}

\emph{Proof of Lem.~\ref{lem:sound-co-obs}}: Firstly, since for all $i\in R$ there are no elements of $X_u^i$
that are not in $\mathcal{N}_G(X_{u-1}^i)$, we have that all locations in
$X_u^i$ are reachable in one time step by taking an edge in $E$ from some
location in $X_{u-1}^i$. Furthermore, since elements in $X_u^i$ are not in the
conflict set $C$, we have that there is a conflict-free walk on $G$ from
$\alpha(s)_s^i$ to each element in $X_u^i$ w.r.t. the known prefix $\alpha(s)$.
Since non-deterministically-moving $a$ is not allowed to move into any element
in $X^{R\setminus\{a\}}$, we therefore have that for all $q\in X_u^a$, there
exists a $y\succeq\alpha(s)$ s.t. $y_u^a=q$. As for
non-deterministically-moving pairs of other robots in $R\setminus\{a\}$, the
abstraction does not explicitly prevent vertex- and edge- conflicts. However,
since elements in $X_u^{R\setminus\{a\}}$ are not in the conflict set $C$, we
have that it is possible for some robot $i\in R\setminus\{a\}$ to occupy each
element of $X_u^{R\setminus\{a\}}$ without causing a conflict with
deterministically-moving robots. Similarly, since non-deterministically-moving
robots in $R\setminus\{a\}$ are not allowed to move into any element in $X^a$,
we conclude that the only conflicts preventing a robot $i\in R\setminus\{a\}$
from reaching a location in $X_u^i$ is a conflict with a different
non-deterministically-moving robot $j\in R\setminus\{a,i\}$. Therefore, for all
$p\in X_u^i$, either there exists a $y\succeq\alpha(s)$ s.t. $y_u^i=p$ or there
exists a $y\succeq\alpha(s),j\in R\setminus\{a,i\}$ s.t. $y_u^j=p$. We conclude
that for all $q\in X_u^a,p\in X_u^{R\setminus\{a\}}$, there exists a
$y\succeq\alpha(s)$, $i\in R\setminus\{a\}$ s.t. $y_u^a=q$ and $y_u^i=p$.
$\blacksquare$

\emph{Proof of Thm.~\ref{thm:sound}}: Now let $u^*$, $l_\text{obs}$ be the time
and position of the first possible observation on $a$ as computed by
Alg.~\ref{alg:verify}.  Let $\tilde{x}$ be any forbidden MAPF deviation for $a$
on $(s,s+k)$, $k>1$. \emph{Case I, $\tilde{x}_{u^*}^a\neq l_\text{obs}$}: it
follows immediately from Lem.~\ref{lem:sound-co-obs} that $\exists
y\succeq\alpha(s),i\in R\setminus a$ s.t.
$l_\text{obs}\in\mathcal{N}_S(y_{u^*}^i)$. Therefore, $a$ misses an
observation, triggering a co-observation-based detection in the CE.  \emph{Case
II(a), $\tilde{x}_{u^*}^a=l_\text{obs}$ and $\exists u\in(t,u^*)$ s.t.
$\tilde{x}_u^a\in\mathcal{N}_S(X_u^{R\setminus\{a\}})$}: in this situation, it
again follows immediately from Lem.~\ref{lem:sound-co-obs} that $\exists
y\succeq\alpha(s),i\in R\setminus\{a\}$ s.t.
$\tilde{x}_u^a\in\mathcal{N}_S(y_u^i)$. However, since $u<u^*$ we contradict
that the first possible observation on $a$ occurs at time $u^*$. Therefore, $a$
has caused an unexpected observation, triggering a co-observation-based
detection in the CE.  \emph{Case II(b), $\tilde{x}_{u^*}^a=l_\text{obs}$ and
$(\neg\exists u\in(t,u^*)$ s.t.
$\tilde{x}_u^a\in\mathcal{N}_S(X_u^{R\setminus\{a\}})$}: the only remaining
forbidden deviations are those where $a$ does not miss the planned observation
at time $u^*$, and does not introduce an unexpected observation at times
$u\in(t,u^*)$.  The algorithm performs a graph search to check that no such
deviation exists.  $\blacksquare$

\end{proof}

The computational complexity of Alg.~\ref{alg:verify} is $\mathcal{O}(RV)$, as the procedure terminates once the attacker's reachable set intersects one of the defenders' reachable sets -- a total of $R$ sets each with maximum cardinality $V$. Each potential attacker $a$ and announcement $\{\alpha(s)\}_{s\in[t,f]}$ can be verified in parallel, allowing for efficient computation of the maximum-length horizon-limiting announcement.

\section{Experimental Results}
\label{sec:exp}

In the preceding section, we have proposed a strategy for mitigating
plan-deviation attacks that rests on robot co-observations and on limiting how
much planning information is revealed at any given moment.  Here, we seek to
answer the following research questions:%
\begin{description}

\item[RQ1] What is the security benefit of HoLA, compared to a centralized MRS that detects problems using localization self-reports only?

\item[RQ2] Compared to a non-security-aware centralized MRS, what is the
overhead of HoLA?

    \item[RQ3]
What properties of robot co-observations from
general MAPF plans lead to security vulnerabilities?
 
    \item[RQ4] What is the security benefit for robot co-observation? How does the inclusion of robot co-observations impact our ability to mitigate plan-deviation attacks without using horizon-limiting announcements?

    \item[RQ5] How does the announcement schedule for incremental plans
    impact the effectiveness of attacks?

    \item[RQ6] What is the security vulnerability associated with announcement
        schemes that may be used in typical centralized MRS deployments?

\end{description}

\subsection{Experimental Setup}
\paratitle{Environment} MAPF plans are computed using the ECBS
algorithm~\cite{barer_suboptimal_2014}, an efficient and bounded sub-optimal
graph-based MAPF solver (and so, applicable for centralized MRS), for a set
of 100 standard MAPF 4-connected grid benchmark instances~\cite{libmultirobotplanning2021}. 
The MAPF instances are solvable (i.e.
there exists a MAPF plan that solves the instance), randomly generated
4-connected $32\times 32$ grids   with either 10, 20,~\ldots, or 100 robots and
$\sim$200 obstacles. We assume each robot has sensing capability
within adjacent squares, that is the sensor model for each robot, $S$, is the
same as the reachability graph $G$. Robots are assumed to
mutually co-observe each other if they are adjacent on the grid.   We implement the announcement security verification in the
Rust programming language; runtimes are reported on an Intel Core i7-6700
processor at 4GHz. Source code to reproduce our experiments can be found at \url{https://github.com/gitsper/hola-announce}

\noindent
\paratitle{Execution scenarios} There are two factors that influence the
announcement schedules: \begin{enumerate*} \item how many steps ahead are included in the
announcement and \item how many announcements are sent in a communication from the
CE to the robots.\end{enumerate*} The number of steps ahead represent a trade-off between
security and delay in computing the task, for increased security the
announcement should include only one step but this will results in increased time
in completing the task by the team of robots. We use the following
notation:

\begin{itemize}
\item $(p, k)$-announcements: specify that the CE makes a new announcement every $p$ timesteps and each announcement includes planning information for the next $k$ steps.
\end{itemize}

\paratitle{Stealthy attacker metrics} We consider the security of
a nominal execution scenario compromised by a stealthy attacker. Instead of
performing simulations, for each scenario we randomly pick one of 10 different
robots to play the role of stealthy attackers and one of 10 locations in the grid to
be marked as the forbidden location and use Alg.~\ref{alg:verify} to check if
the announcements are horizon-limiting w.r.t.  the compromised robots and
forbidden location. We use the following metric:

\begin{itemize}

\item \emph{Secure stealthy scenario} is the proportion of scenarios that can be
    verified by Alg.~\ref{alg:verify} as secure from the stealthy attacker given a set of
        possible scenarios and is an indicator of how vulnerable the case is to
        stealthy attackers. 

\end{itemize}

A MAPF instance with associated ECBS-computed MAPF plan, co-observation
schedule, and announcement schedule make up an \emph{execution scenario},
or scenario for short.  

\noindent
\paratitle{Bold attacker metrics} In each scenario, we additionally examine the
behavior of a non-stealthy, or bold, attacker that may attempt an attack even if it is not sure the attack will be a forbidden and undetected deviation. The behavior of the bold attacker is deviate to $V_\text{forbidden}$, matching the co-observation schedule as well as possible given the information in $\alpha(t)$ without reasoning about eventual continuations from $\alpha(t)$. We simulate a bold attacker 100 times, each time randomly picking one of
10 different robots to play the attacking role and one of 10 locations
in the grid to be marked as the forbidden location.  We average the metrics over the
scenarios.  For bold attackers, we cannot verify that the plans are
secure, so the CE will attempt to detect, but we cannot guarantee detection.  We
use the following metrics to capture the attacks and their detection: 

\begin{itemize}
\item \emph{Bold attack success} is the 
proportion of simulations where the compromised
robot performs a forbidden deviation 
and is an indicator of how relatively dangerous the compromised robot is in the
set of scenarios. 
\item \emph{Bold detection miss} is the fraction of positive cases where the 
CE reports no anomaly based on our self-report-based detection mechanism
and is an indicator of how many forbidden deviations are missed by the CE. 
\end{itemize}

\subsection{Security Benefit of HoLA}

\begin{figure}[b]
	\centering
	\includegraphics[width=.6\linewidth]{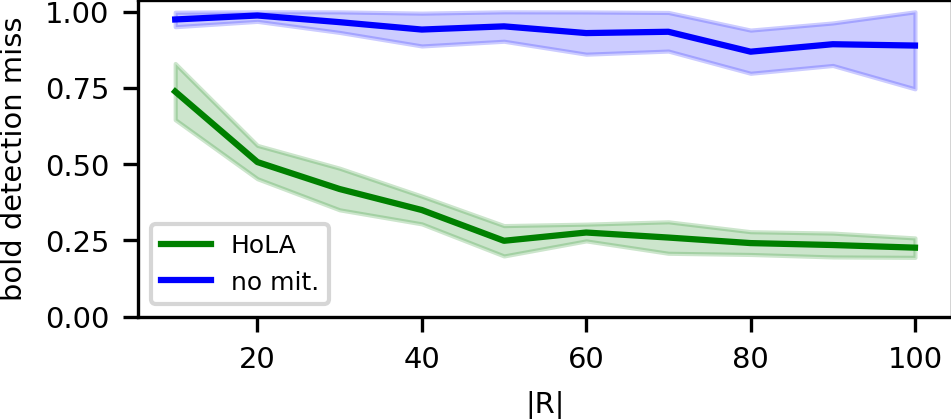}
	\caption{Bold detection miss for the bold attacker when the CE employs either localization-based detection only (no mitigation) or HoLA. In the HoLA case, the CE collects co-observation reports and the announcements are of maximal length that are verified as horizon-limiting.}
	\label{fig:hola-compare}
\end{figure}

The central thesis of this paper is that robot self-reports of localization alone simply do not suffice to detect or prevent malicious
behavior in centralized MRS. As such, the primary contribution of our paper is HoLA, a security measure for the CE leveraging robot co-observations and horizon-limiting announcements. With HoLA, the CE
can compute and release maximal announcements such that there is a guarantee that all stealthy attacks in the system will be prevented. Furthermore, HoLA aids in the detection of non-stealthy attackers in the system by simultaneously making the attack-planning problem more difficult and by gathering the co-observation reports. To demonstrate the necessity of HoLA (\RQ{1}), we consider a bold attacker and we
compare two CE implementations, \begin{enumerate*} \item with localization-based detection only that releases the full MAPF plan to the robots (no mitigation) and \item HoLA: co-observation-based detection where the CE releases maximal-length announcements that have been verified with Alg.~\ref{alg:verify} as preventing all stealthy attacks. \end{enumerate*}

In Fig.~\ref{fig:hola-compare}, we show the miss detection for the  bold attacker as a function of $|R|$, the number of robots in the execution scenario. We observe that when the CE employs no mitigation, essentially all forbidden deviations by the  bold attacker are missed by the CE, highlighting the inadequacy of localization-based detection. Whereas with HoLA, not only are all stealthy attacks provably prevented, the CE misses far fewer bold attacks, ultimately reaching a bold detection miss of just 22\% for $|R|=100$. HoLA consistently outperforms the CE with no mitigation in terms of detection; bold detection miss is lower for larger $|R|$ due to more frequent co-observations in more congested environments. We note that for certain scenarios, the bold attacker is certain to succeed without being detected by HoLA, e.g. when the compromised robots are close to the forbidden zone and far away from other robots.

\subsection{Overhead of Announcement Security Verification}

Our solution proposes that the CE should use Alg.~\ref{alg:verify} to verify
that the chosen $\alpha$ are horizon-limiting MAPF announcements.  The
verification procedure has a computational overhead that depends on the number
of robots, $|R|$. Our scenario
set had instances between 10 and 100 robots with a maximum MAPF length of 70
time steps (the average length is ~49 time steps). Across all scenarios, we
verify each robot in sequence using Alg.~\ref{alg:verify}; it never took longer
than 6 minutes  to terminate. For $|R|=10$, the average time was 48 sec. and
for $|R|=100$, 2.11 min. As the announcements are verified for each robot
independently, the computation is \textit{parallelizable}.  As a point of comparison,
MAPF instances in our benchmark took ECBS up to 1 min. to plan (with
suboptimality bound 1.3), whereas on our 8 core CPU the verification procedure
took up to $6\text{ min.}/8=45\text{ sec.}$
\begin{figure}[t]
	\centering
	\begin{subfigure}[t]{0.282\linewidth}
		\centering
		\includegraphics[width=\linewidth]{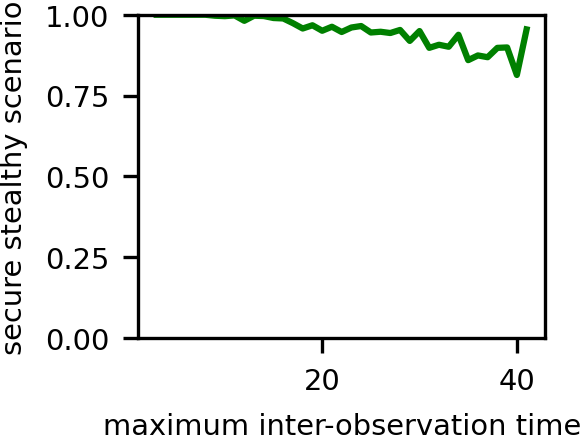}
		\label{fig:inter-obs-secure}
	\end{subfigure}
	\begin{subfigure}[t]{0.282\linewidth}
		\centering
		\includegraphics[width=\linewidth]{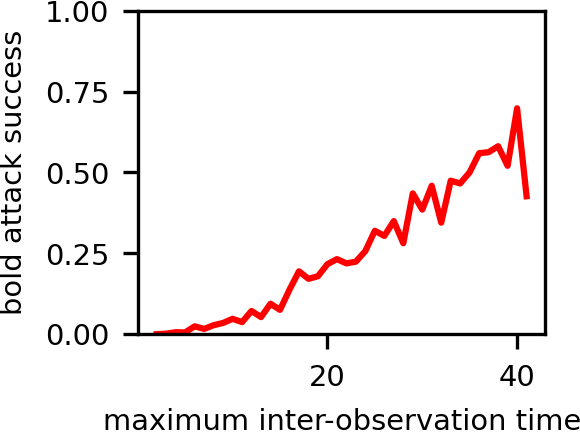}
		\label{fig:inter-obs-attack-success}
	\end{subfigure}
	\caption{
		Attacker success for a bold attacker and secure scenarios for a stealthy attacker
		for {\em general MAPF plans} (i.e. plans not known if they are deviation-detecting because they were not generated
		as such), as a function of the time that robots go unobserved
		(maximum inter-observation time).
	}
	\label{fig:inter-obs}
\end{figure}

\subsection{Security of General MAPF Plans}
We aim to understand what qualities of general MAPF plans contribute to
or detract from the security of the scenario under HoLA (\RQ{3}).  From the perspective of
the attacker, what makes an deviation-detecting plan secure is that there does not
exist an undetected forbidden plan-deviation between consecutive observations
made on the attacker.  We perform the following experiment: we ran attack
scenarios with a simulated bold  attacker where we varied the number
of ahead steps included in announcement  and we
organize the scenarios by the maximum amount of timesteps the attacker has
between consecutive observations, we refer to this as {\em maximum
	inter-observation time}. 

In Fig.~\ref{fig:inter-obs}, we plot the bold attack success and observe that
attackers that have fewer ($<15$) timesteps at most between consecutive
observations have attack success ratio below the 10\% whereas attackers that
have large gaps between consecutive observations ($>25$) have attack success
significantly above average, reaching a attack success of $\sim70\%$ at
maximum inter-observation times of 40 timesteps.  The correlation between
increased maximum inter-observation time and worsened security is also
confirmed by tracking the secure scenarios for the stealthy attacker
verification attempts: we find that attackers that are observed at least every
10 timesteps have 100\% secure scenarios, beyond which point the secure
scenarios trend downward ultimately reaching 81\% for the least-observed
stealthy attackers.

\begin{figure}[t]
	\centering
	\begin{subfigure}[t]{0.282\linewidth}
		\centering
		\includegraphics[width=\linewidth]{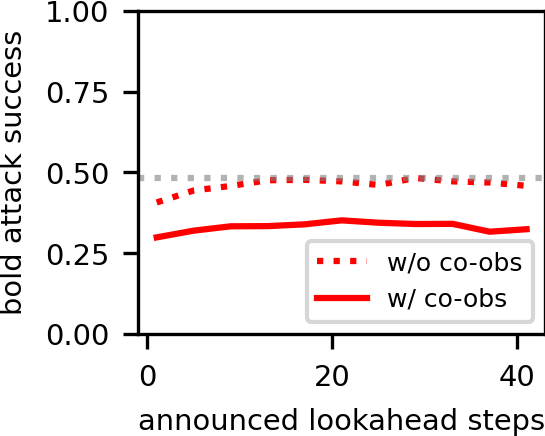}
		\label{fig:benefit-attack-success}
	\end{subfigure}
	\begin{subfigure}[t]{0.282\linewidth}
		\centering
		\includegraphics[width=\linewidth]{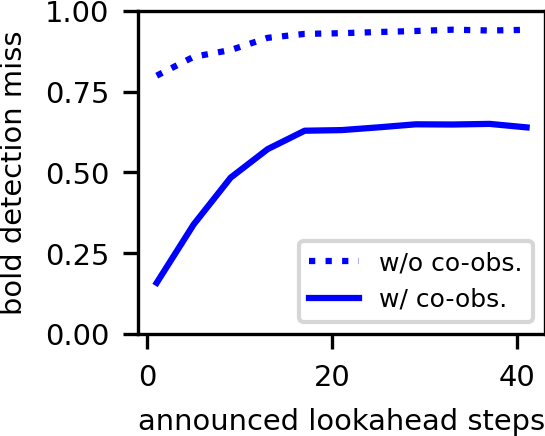}
		\label{fig:benefit-miss}
	\end{subfigure}
	\caption{
		Attack success and miss detection for the bold attacker, with
		co-observations enabled and without co-observations, where the CE uses
		a $(1,k)$-announcement schedule; where k represents the number of lookahead steps.
		In some scenarios the attack is not feasible, resulting in upper bound bold attack success marked with gray dotted line.
	}
	\label{fig:benefit}
\end{figure}

\subsection{Ablation Study: Security Benefit of Robot Co-observations}

In Section~\ref{subsec:coobs-sched} we claimed that robot self-reports
containing only localization information are not sufficient to provide security
guarantees. We support through experimental results this claim (\RQ{4}).  We
consider a bold attacker and we compare two CE implementations, one
with localization-based detection only and one with co-observation-based detection,
where we vary the amount of information available in each announcement, i.e.
how many steps ahead are included in the announcement. 
In order to trigger a detection by the CE in the case where no co-observations
are used,  the compromised robot would need to either collide with a
non-compromised robot or otherwise occupy the location that a non-compromised
robot is meant to occupy.  
For the case with co-observations, a detection would
also occur if the reported co-observations do not match what the CE expected
(see Def.~\ref{def:co-observation-detection}).

In Fig.~\ref{fig:benefit} we show the attack success and the miss detection
for the bold attacker, as a function of $k$, the number of lookahead steps included
in an announcement. We observe that in the situation with minimal
announcements ($k=1$), for the no co-observation setting, the CE has a miss ratio of 80\% whereas with
the robot co-observations present the CE has a miss ratio of
just 16\%. As the announcements become more informative ($k$ larger than 25), the miss
ratio of the no co-observation CE approaches almost 94\% whereas the CE that gathers
co-observations approaches a miss ratio of 64\%. The
bold attacker success decreased from about 46\% to 33\% in the cases 
when no co-observations are used, or when co-observations are used, respectively.

\subsection{Ablation Study: Impact of the Announcement Schedule on Security}

We have argued in Section~\ref{subsec:secure-ann} that announcement schedules impact
the security of a scenario, since more informative announcement schedules decrease the set of plan
deviations that the attacker considers to be possibly forbidden and undetected (\RQ{5}).
However, we have no theoretical guarantee that (1)
Alg.~\ref{alg:verify} is able to   prove more of the less informative scenarios to be horizon-limiting and
(2) that the theoretical increase in attack planning difficulty for bold
attackers under less informative announcements corresponds to a measurable
decrease in attacker success and stealth.

We organize the results
in Fig.~\ref{fig:ann} by the parameter $k$, the number of lookahead steps
in an announcement; by monotonicity of announcements, $(1,k)$-ann.    are less
informative than $(1,k+1)$-ann., and $(k,k)$-ann. are less informative than
$(1,k)$-ann. For the stealthy attacker security verification (see Fig.~\ref{fig:ann}), 
we indeed observe
a negative correlation between $k$ and the secure scenarios: e.g. across all
$(1,k)$-ann. scenarios we have  secure scenarios of approx. 98\% for minimal
($k=1$) announcements, which drops to secure scenarios of approx. 90\% as $k$
increases.  We further report that the density of the agents in the
environment has a large impact on how quickly our ability to verify security
with Alg.~\ref{alg:verify} deteriorates with $k$. As an example, across the
scenarios with few robots, $|R|=10$, the secure scenarios drops to approx. 70\%
whereas for scenarios with $|R|>70$ the secure scenarios does not drop below 95\%.
In addition to supporting that our proposed approach is able to verify the
security of a majority of scenarios w.r.t. stealthy attackers, we note that
increased density of non-compromised robots improves our ability to verify the
security of the system.

\begin{figure}[t]
	\centering
	\begin{subfigure}[t]{0.282\linewidth}
		\centering
		\includegraphics[width=\linewidth]{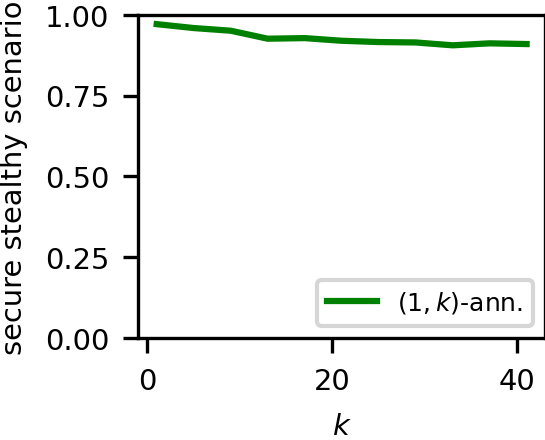}
		\label{fig:secure-scenario}
	\end{subfigure}
	\begin{subfigure}[t]{0.282\linewidth}
		\centering
		\includegraphics[width=\linewidth]{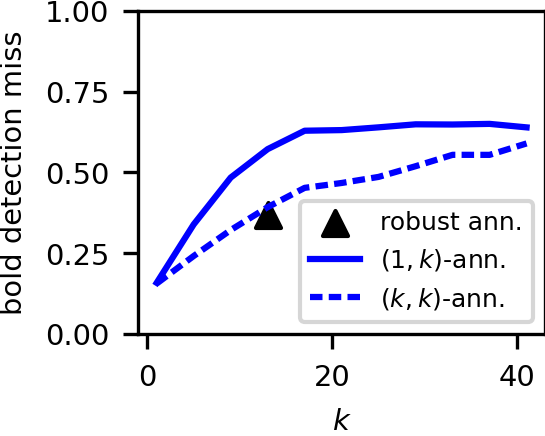}
		\label{fig:detection-miss}
	\end{subfigure}
	\caption{
		Secure scenarios for the stealthy attacker and missed detection for a bold attacker, comparing scenarios where the CE uses $(1,k)$-announcements or $(k,k)$-announcements; announcing the next $k$ steps
		at every time step and announcing the next $k$ steps once every $k$ steps
		respectively. We also plot the detection miss against the average lookahead
		for the robust announcement scheme.
	}
	\label{fig:ann}
\end{figure}

We observe a positive correlation between the
lookahead parameter $k$ of $(p,k)$-announcements and the miss ratio of the CE detections for the bold
attacker (Fig.~\ref{fig:ann}).  Specifically, minimal announcements ($k=1$) correspond to miss
ratios of less than 20\%, but with the most informative announcements tested
the miss ratio approaches up to approx. 60\%. The results indicate that even
though limiting the announcements does not change the set of behaviors
available to bold attackers, in practice the increase in ambiguity the attacker
experiences when choosing a plan deviation has a significant impact on the
stealth of the bold attacker. 

The synthetic announcement classes help to demonstrate the relationship between
information release and security, however the synthetic announcement classes
are not directly comparable to other announcement schemes that may be used in a
typical MRS deployment (\RQ{6}). As a point of comparison, we consider scenarios
where robust announcements are computed from the motion plan as per
\cite{honig_persistent_2019}.  Since the robust announcements have dynamic
prefix lengths that are different for each robot, we measure the average
per-robot prefix length, that is across all plans, times, and robots, the
typical amount of future planning information that is available about a given
robot. We find the average lookahead to be approx. 13 timesteps, and that for
the robust announcement scheme the CE has a miss ratio of approx. 37\%
(Fig.~\ref{fig:ann}). The results indicate that from a security
perspective, the robust announcement scheme is quite similar to the
$(k,k)$-announcement class (since $(13,13)$-announcements have a similar miss
ratio).

\section{Related Work}
\label{sec:relwork}
\paratitle{Patrolling} Most relevant to our work is the use of robots in a physical security context has been
considered in the context of adversarial multi-robot patrolling (MRP) games,
where a multi-robot system should be programmed to maximize detections of  intruders attempting to
penetrate to a forbidden zone~\cite{agmon2008multi}.
Adversaries could have zero, partial, or full system
knowledge~\cite{agmon2011multi}; the intrusion detection could be centralized
or decentralized~\cite{fagiolini2007decentralized,fagiolini2008consensus}; and
the detector might use fixed sensors in addition to patrolling
robots~\cite{kim2008multi}.
MRP can be differentiated from our work in
several ways. In MRP, patrolling robots' only objective is patrolling, the
patrolling robots are assumed trustworthy, and the intruder is an outsider -- 
whereas in our setting the robots' primary objective is servicing application
tasks, the robots may be compromised, and the ``intruder'' (the robot
attempting to penetrate the forbidden zone) is an insider.

\noindent
\paratitle{Robust MAPF} Prior work on MAPF has proposed different announcement
schedules in order to 
allow for more flexible re-planning in case of agent failures or motion delays
\cite{honig2019persistent,atzmon_robust_2020}, or provide fault-tolerant
robot planning~\cite{yang2011fault,arrichiello2015observer}.
In this paper we focus on mitigating plan-deviation attacks, thus our announcement
schedule is focused on incremental disclosure of knowledge for security purposes. Recent work in multi-robot surveillance has considered how compromised robots can effectively deny service, e.g. in~\cite{liu2021distributed}, resilience to compromised robots is cast as a robust task scheduling problem.

\noindent
\paratitle{Security for robotic applications}	
Several works study robot and multi-agent security; a survey 
is presented in~\cite{bijani2014review,yaacoub2021robotics}.
Robot cyber security is 
	analyzed at the communication-level by Bicchi~\emph{et.~al.}~\cite{bicchi2008heterogeneous}
	and Renganathan and Summers~\cite{renganathan2017spoof};
	considered with a human-in-the-loop by Portugal~\emph{et.~al.}~\cite{portugal2017role};
	and discussed broadly by Morante, Victores, and Balaguer~\cite{morante2015cryptobotics}.
Insecurities arising from interactions between robots and the physical environment were studied
in a number of works such as vulnerabilities in robotic arms of the type used in factory
assembly lines~\cite{quarta2017experimental}, 
vulnerabilities in robot sensors~\cite{choi2020software}, and vulnerabilities in actuators ~\cite{guo2018roboads}.
Some works show how attackers can exploit software vulnerabilities in the
Robot Operating System (ROS) for attacks and propose corresponding security enhancements
\cite{dieber2016application,rivera2019ros}. The proposed
defenses are focused on the software and not the robotic applications themselves.

\noindent
\paratitle{Location-based attacks in routing protocols}
The work in \cite{shoukry2018sybil}
presents Sybil attack-resilient traffic estimation and routing algorithm that uses information
from sensing infrastructure and the dynamics and proximities of vehicles.
Other works build attack-resilient network protocols by exploiting physical properties of the system~\cite{gil2017guaranteeing}. The setting in this case is very different from our problem 
where there is a central entity, CE, that is doing the planning and detection and the robots are 
constrained in how they move.

\vspace{-1ex}
\section{Conclusion}
\label{sec:concl}

In this paper we focused on the problem of mitigating
	\emph{plan-deviation attacks} 
	with robot co-observations and incremental plan release.
The attacker has two goals: first, to move toward a forbidden
zone, and second, to remain undetected by the central entity.
We leverage co-observation to mitigate the ability of the attacker to lie about
its location; and we limit the size of the incremental plan announcements so
that the attacker has limited ability to confidently plan ahead.  We describe
two types of attackers -- ``stealthy'', and ``bold'' -- based on their desire
to remain undetected or not. We prove that our solution prevents attacks for
a set of stealthy attackers. For bold attackers we show
experimentally that our solution significantly increases the detection of the
attacks. Our solution also has a small overhead making it practical for sets of tens to hundreds of robots.

\balance

\bibliographystyle{unsrtnat}
\bibliography{bib/IEEEConfFull,bib/IEEEFull,bib/OtherFull,bib/main}

\end{document}